\documentclass[12pt,a4paper]{article}

\title{}
\author{}
\date{}

\usepackage{amsmath,amsfonts,amssymb,amsthm}
\usepackage{latexsym}

\addtolength{\hoffset}{-1.5cm}
\addtolength{\textwidth}{3.0cm}
\addtolength{\voffset}{-2.1cm}
\addtolength{\textheight}{4.2cm}

% Equation numbering of form (1.23)
\makeatletter
\@addtoreset{equation}{section}
\makeatother

% Equations
\newcommand{\be}{\begin{equation}}
\newcommand{\ee}{\end{equation}}
\newcommand{\ben}{\begin{equation}}
\newcommand{\een}{\end{equation}}
\newcommand{\bea}{\setlength\arraycolsep{2pt} \begin{eqnarray}}
\newcommand{\eea}{\end{eqnarray}}
\newcommand{\nnr}{\nonumber \\}

% General
\newcommand{\eq}[1]{(\ref{#1})}

\newcommand{\fr}{\frac}
\newcommand{\tf}{\tfrac}

\newcommand{\wtd}{\widetilde}

% Mathematical symbols
\newcommand{\df}{\textrm{d}}
\newcommand{\expe}[1]{\textrm{e}^{#1}}
\newcommand{\pd}{\partial}
\newcommand{\sr}{\sqrt}

% Greek letters

\newcommand{\gep}{\epsilon}

\newcommand{\gvf}{\varphi}

% Roman letters

\newcommand{\im}{\textrm{i}}

% Blackboard bold

% caligraphic letters

\newcommand{\cL}{\mathcal{L}}

% bold vectors

% theorems
\newtheorem{proposition}{Proposition}
\newtheorem{lemma}[proposition]{Lemma}
\newtheorem{corollary}[proposition]{Corollary}

\theoremstyle{definition}

\begin{document}

% to hide page number 1 on first page
\thispagestyle{empty}

\begin{flushright}
CCQCN-2015-112
\end{flushright}
\vspace*{100pt}
\begin{center}
\textbf{\Large{Higher-dimensional lifts of Killing--Yano forms with torsion}}\\
\vspace{50pt}
\large{David D. K. Chow}
\end{center}

\begin{center}
\textit{Crete Center for Theoretical Physics and\\
Crete Center for Quantum Complexity and Nanotechnology,\\
Department of Physics, University of Crete, 71003 Heraklion, Greece}\\
{\tt dchow@physics.uoc.gr}\\
 \vspace{30pt}
 {\bf Abstract\\}
 \end{center}
Using a Kaluza--Klein-type lift, it is shown how Killing--Yano forms with torsion can remain symmetries of a higher-dimensional geometry, subject to an algebraic condition between the Kaluza--Klein field strength and the Killing--Yano form.  The lift condition's significance is highlighted, and is satisfied by examples of black holes in supergravity.
\newpage

%%%%%%%%%%%%%%%%%%%%%%

\section{Introduction}	

%%%%%%%%%%%%%%%%%%%%%%

Killing--Yano (KY) $p$-forms \cite{Yano} generalize Killing vectors to higher-rank antisymmetric tensors $Y_{a_1 \ldots a_p} = Y_{[a_1 \ldots a_p]}$ that satisfy
%%%
\be
\nabla_a Y_{b_1 \ldots b_p} = \nabla_{[a} Y_{b_1 \ldots b_p]} .
\ee
%%%
The Kerr--Newman black hole spacetime admits a KY 2-form \cite{Floyd, Penrose:1973um}, which underlies many of the solution's remarkable properties.  Higher-dimensional generalizations in Einstein gravity possess additional KY $p$-forms and retain many of the remarkable properties of the 4-dimensional Kerr spacetime; see \cite{Kubiznak:2008qp, Yasui:2011pr} for reviews.  Like Killing vectors, KY forms give symmetries.  However, they are ``hidden symmetries'' of phase space, rather than configuration space; see \cite{Cariglia:2014ysa} for a review.  Some particular consequences of KY forms are: constants of motion for charged particle motion \cite{Hughston:1972qf}; the existence of an operator commuting with the Dirac operator \cite{Carter:1979fe}; and enhanced worldline supersymmetry of a spinning particle \cite{Gibbons:1993ap}.

There are generalizations of the Kerr--Newman black hole that are charged, rotating black hole solutions of supergravities.  These theories contain the 3-form Kalb--Ramond field strength $H$ of string theory.  Some of the known solutions admit generalizations of KY forms in which the connection is modified to include a torsion \cite{YanoBochner}, identified with the 3-form $H$, for example in \cite{Wu:2009cn, Kubiznak:2009qi, Houri:2010fr, Chow:2013gba}.  Classifying supersymmetric solutions of supergravity leads to $G$-structures, on which KY forms with torsion also appear naturally \cite{Papadopoulos:2007gf, Papadopoulos:2011cz}.  Separately, KY 3-forms arise in 11-dimensional supergravity reduced on $S^7$; for squashed $S^7$ a KY 3-form constructed from a Killing spinor gives a Ricci-flattening torsion \cite{Englert:1983qe}, while the 70 KY 3-forms of round $S^7$ are associated with 35 massless and 35 massive pseudoscalars \cite{Duff:1983gh, Duff:1986hr}.  See \cite{Santillan:2011sh} for a review of KY forms in supersymmetric theories and $G$-structures.

One reason for studying black holes in supergravity, rather than more general theories involving gravity, is that it seems to be easier to find exact solutions that are charged, rotating black holes.  A charged, rotating exact black hole solution of higher-dimensional Einstein--Maxwell theory is not known, whereas examples are known explicitly in supergravity.  Although there are solution generating techniques arising from string theory dualities, these do not explain the construction of asymptotically anti-de Sitter (AdS) black holes in gauged supergravity.  However, behind all of these known solutions are Killing tensors.  In general, these are symmetric Killing--St\"{a}ckel tensors rather than antisymmetric KY forms with torsion; see e.g.\ \cite{Chow:2008fe}.  Killing tensors may provide guidance for finding new solutions.  Whereas a general charged, rotating black hole in 5-dimensional minimal gauged supergravity was originally found using inspired guesswork \cite{Chong:2005hr}, it can be derived assuming a certain type of Killing tensor \cite{Ahmedov:2009ab}.  Some general classes of spacetimes admitting KY forms with torsion have been found \cite{Houri:2012eq, Houri:2012su}.  For exact solutions outside supergravity, KY forms with torsion have inspired the generalization of the Wahlquist perfect fluid solution to higher dimensions \cite{Hinoue:2014zta}.

A second reason for studying black holes in supergravity is because string theory may provide a framework for a microscopic understanding of black hole entropy.  The known solutions are obtained by solving lower-dimensional reductions rather than string theory or M-theory in 10 or 11 dimensions, but microscopic countings use higher-dimensional objects.  The structure of a higher-dimensional solution may differ from its lower-dimensional interpretation, for example singularities might be resolved.  Higher dimensions are used for solution generating techniques in string theory; in this context, there has been recent study of KY forms in supergravity black holes \cite{Chervonyi:2015ima}.  It is therefore of interest to study solutions from a higher-dimensional perspective.

In this paper, motivated by black hole examples in supergravity, we consider higher-dimensional lifts of KY forms with 3-form torsion included.  We consider a specific Kaluza--Klein lift of the metric and 3-form and prove general results about whether a lower-dimensional symmetry is also a higher-dimensional symmetry.  There is an algebraic condition between the Kaluza--Klein field strength and the KY forms with torsion.  We highlight the significance of this lift condition, as it appears in various consequences of KY forms mentioned above.  Although we prove some general results about lifting Killing tensors, without assuming any field equations, our Kaluza--Klein ansatz is motivated by known examples in supergravity, for which the lift condition holds.

Note that other higher-dimensional lifts of Killing tensors have also been studied; in particular, two types of lifts arising from the work of Eisenhart \cite{Eisenhart}.  One Eisenhart lift is a warped product lift with one extra dimension, leading to studies of more general warped products \cite{Benn, Krtous:2015ona}.  The other Eisenhart lift involves two extra dimensions, going from Euclidean signature to Lorentzian signature; see e.g.\ \cite{Cariglia:2012fi}.  Conformal Killing--Yano forms on metric cones have been studied in \cite{Semmelmann}.

The outline of this paper is as follows.  In Section 2, we recall the basic definitions of Killing tensors and prove general results for lifting Killing tensors to higher dimensions using a Kaluza--Klein-type ansatz.  We highlight the lift condition for a Killing--Yano form with torsion, discuss its solution, and note its appearances in other contexts.  In Section 3, we apply these general results to examples of black holes in string theory.  We conclude in Section 4.

%%%%%%%%%%%%%%%%%%%%%%%%%

\section{Killing tensors}

%%%%%%%%%%%%%%%%%%%%%%%%%

In this section, we study general properties of Killing--Yano forms with torsion,  both their geometric properties and physical consequences.  These results are independent of any field equations or details of the spacetime.

%%%%%%%%%%%%%%%%%%%

\subsection{Definitions}

%%%%%%%%%%%%%%%%%%%

We summarize here the basic definitions of the (conformal) Killing tensors that we use.

Firstly, there are antisymmetric conformal Killing--Yano $p$-forms \cite{Tachibana, Kashiwada}.  We also allow a torsion that is derived from a 3-form $T$, so that we have a covariant derivative $\nabla^T$ with connection $\Gamma{^T}{^a}{_{b c}} = \Gamma{^a}{_{b c}} + \tf{1}{2} T{^a}{_{b c}}$, where $\Gamma{^a}{_{b c}}$ is the Levi-Civita connection and $T_{a b c} = T_{[a b c]}$, which is relevant for $p \geq 2$.  In $D$ spacetime dimensions, a conformal Killing--Yano $p$-form with torsion (CKYT $p$-form\footnote{I use the more descriptive term ``with torsion'' following \cite{Howe:1996kj}, rather than ``generalized'', ``modified'' or ``pseudo'' in other literature.}) $Y_{a_1 \ldots a_p} = Y_{[a_1 \ldots a_p]}$ satisfies
%%%
\begin{align}
\nabla{^T}{_a} Y_{b_1 \ldots b_p} & = \nabla{^T}{_{[a}} Y_{b_1 \ldots b_p]} + p g_{a [b_1} \widehat{Y}_{b_2 \ldots b_p]} , & \widehat{Y}_{b_2 \ldots b_p} & = \fr{1}{D - p + 1} \nabla{^T}{_c} Y{^c}{_{b_2 \ldots b_p}} .
\end{align}
%%%
In other words, we have a $p$-form $Y$ whose derivative can be decomposed into an exterior derivative plus a divergence, where the derivatives include torsion.  If the divergence part is missing, i.e.\ $\widehat{Y}_{b_2 \ldots b_p} = 0$, then we have a Killing--Yano $p$-form with torsion (KYT $p$-form),
%%%
\be
\label{KYT}
\nabla{^T}{_a} Y_{b_1 \ldots b_p} = \nabla{^T}{_{[a}} Y_{b_1 \ldots b_p]} .
\ee
%%%
An equivalent way of expressing the KYT equation is $\nabla{^T}{_{(a}} Y_{b_1 ) \ldots b_p} = 0$.  If instead the exterior part is missing, i.e.\ $\nabla{^T}{_{[a}} Y_{b_1 \ldots b_p]} = 0$, then we have a closed conformal Killing--Yano $p$-form with torsion (CCKYT $p$-form),
%%%
\begin{align}
\nabla{^T}{_a} Y_{b_1 \ldots b_p} & = p g_{a [b_1} \widehat{Y}_{b_2 \ldots b_p]} , & \widehat{Y}_{b_2 \ldots b_p} & = \fr{1}{D - p + 1} \nabla{^T}{_c} Y{^c}{_{b_2 \ldots b_p}} .
\end{align}
%%%
The Hodge dual of a KYT $p$-form is a CCKYT $(D - p)$-form, and vice versa \cite{Kubiznak:2009qi}.

Secondly, there are symmetric Killing tensors.  A rank-$p$ Killing--St\"{a}ckel (KS) tensor is a symmetric tensor $K_{a_1 \ldots a_p} = K_{(a_1 \ldots a_p)}$ that satisfies
%%%
\be
\label{KS}
\nabla_{(a} K_{b_1 \ldots b_p)} = 0 .
\ee
%%%
From a KYT $p$-form $Y_{a_1 \ldots a_p}$, we can construct a rank-2 KS tensor
%%%
\be
\label{Ysquared}
K_{a b} = \tf{1}{(p - 1)!} Y{^{c_1 \ldots c_{p - 1}}}{_a} Y_{c_1 \ldots c_{p - 1} b} .
\ee
%%%

%%%%%%%%%%%%%%%%%%%%%%%%%%%%

\subsection{General results}

%%%%%%%%%%%%%%%%%%%%%%%%%%%%

We now prove some general results concerning Killing tensors and a particular type of Kaluza--Klein lift ansatz.  Although the ansatz is naturally motivated by torus reductions in Kaluza--Klein theory, the results that we prove here are geometric results independent of any particular theory, i.e.\ do not use field equations.  We show that under certain conditions, (CC)KYT forms in lower dimensions lift to (CC)KYT forms in higher dimensions.  A key role is played by the KYT lift condition
%%%
\be
\label{KYTliftcond}
F{^b}{_{[a_1}} Y_{a_2 \ldots a_p] b} = 0 .
\ee
%%%
This algebraic condition relates a Kaluza--Klein gauge field strength $F$ and a KYT form $Y$, and has further consequences that we discuss later.

We consider a $D$-dimensional metric $\df s^2$ with a gauge field $A$ and a 3-form $H$.  For the results here, we define its Kaluza--Klein lift as a metric $\df \overline{s}^2$ and a 3-form $\overline{H}$ given by the ansatz
%%%
\begin{align}
\label{KKansatz}
\df \overline{s}^2 & = \df s^2 + (\df z + A)^2 , & \overline{H} & = H + F \wedge (\df z + A) ,
\end{align}
%%%
where $F = \df A$.  The $D$-dimensional metric is $\df s^2 = g_{a b} \, \df x^a \, \df x^b$, and the $(D + 1)$-dimensional metric is $\df \overline{s}^2 = g_{A B} \, \df x^A \, \df x^B$, with $\{ x^A \} = \{ x^a, z \}$.

If we were specializing to a Kaluza--Klein reduction of a $(D+1)$-dimensional bosonic string theory, then this is not the most general ansatz for $\df \overline{s}^2$ and $\overline{H}$, as each of these reduces to give the same gauge field.  Also, no scalars appear in the metric ansatz, through the choices of conformal frames for both metrics, and also because one scalar is consistently truncated.  However, this ansatz is sufficient for the examples that we shall consider later, and for purely geometric results can be used for any $D$-dimensional geometry.

We shall need the following result from the connections.
\begin{lemma}
\label{DeltaGamma}
The non-vanishing components of the difference of connections including torsion $(\Delta \Gamma^H){^A}{_{B C}} = \overline{\Gamma}{^{\overline{H}}}{^A}{_{B C}} - \Gamma{^H}{^A}{_{B C}} $ are
%%%
\be
(\Delta \Gamma^H){^a}{_{b c}} = A_b F{_c}{^a} .
\ee
%%%
\end{lemma}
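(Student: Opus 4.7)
The result is a direct calculation in the coordinate basis $\{x^a, z\}$. From (\ref{KKansatz}) the lifted metric has $\bar{g}_{ab} = g_{ab} + A_a A_b$, $\bar{g}_{az} = A_a$, $\bar{g}_{zz} = 1$, with inverse $\bar{g}^{ab} = g^{ab}$, $\bar{g}^{az} = -A^a$, $\bar{g}^{zz} = 1 + A_c A^c$. The lifted 3-form has components $\bar{H}_{abc} = H_{abc} + 3 F_{[ab} A_{c]}$ and $\bar{H}_{abz} = F_{ab}$, with the remaining components determined by total antisymmetry; in particular $\bar{H}_{zbc} = F_{bc}$. The ansatz is $z$-independent, so $\partial_z$ annihilates all geometric data.

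I first compute $\bar{\Gamma}^a{}_{bc}$ from the Koszul formula, splitting the sum over the dummy index $D$ into base ($D=d$) and fibre ($D=z$) contributions. The $D=d$ piece gives $\Gamma^a{}_{bc} + \tfrac{1}{2}(A_b F_c{}^a + A_c F_b{}^a) + \tfrac{1}{2} A^a (\partial_b A_c + \partial_c A_b)$, where $F_{ab} = \partial_a A_b - \partial_b A_a$ arises from rearranging the cross terms generated by $\bar{g}_{ab} - g_{ab} = A_a A_b$. The $D=z$ piece, through $\bar{g}^{az} = -A^a$ and $\partial_b \bar{g}_{zc} = \partial_b A_c$, contributes $-\tfrac{1}{2} A^a (\partial_b A_c + \partial_c A_b)$, exactly cancelling the symmetric derivative term and leaving the Levi-Civita difference $\bar{\Gamma}^a{}_{bc} - \Gamma^a{}_{bc} = \tfrac{1}{2}(A_b F_c{}^a + A_c F_b{}^a)$, symmetric in $bc$.

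For the torsion contribution I raise the first index of $\bar{H}_{abc}$: the $D=d$ piece is $H^a{}_{bc} + F^a{}_b A_c + F_{bc} A^a + F_c{}^a A_b$, and the $D=z$ piece $\bar{g}^{az} \bar{H}_{zbc} = -A^a F_{bc}$ cancels the $F_{bc} A^a$ term, giving $\tfrac{1}{2}(\bar{H}^a{}_{bc} - H^a{}_{bc}) = \tfrac{1}{2}(F^a{}_b A_c + F_c{}^a A_b)$, antisymmetric in $bc$ via $F^a{}_b = -F_b{}^a$. Adding the symmetric Christoffel and antisymmetric torsion parts, the two $A_c F_b{}^a$ terms cancel by this sign flip, leaving $(\Delta \Gamma^H)^a{}_{bc} = A_b F_c{}^a$. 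A parallel calculation handles the remaining components with one or more indices equal to $z$: for instance $\bar{\Gamma}^a{}_{bz} = \tfrac{1}{2} F_b{}^a$ is killed by $\tfrac{1}{2}\bar{H}^a{}_{bz} = -\tfrac{1}{2} F_b{}^a$. The main obstacle is purely organisational --- the symmetric and antisymmetric parts in $bc$ have to be tracked carefully so that the final cancellation between Christoffel and torsion emerges cleanly.
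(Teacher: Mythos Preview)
Your approach is the paper's own --- a direct computation in the coordinate basis $\{x^a,z\}$ --- carried out in far more detail than the paper's one-line ``the result follows by computation''. The identification of the metric, inverse metric, and 3-form components is correct, and your split of $(\Delta\Gamma^H)^a{}_{bc}$ into the symmetric Levi--Civita piece $\tfrac12(A_bF_c{}^a+A_cF_b{}^a)$ and the antisymmetric torsion piece $\tfrac12(F^a{}_bA_c+F_c{}^aA_b)$, with the cancellation leaving $A_bF_c{}^a$, is exactly right.

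One caution about the closing assertion. You verify $(\Delta\Gamma^H)^a{}_{bz}=0$ and then say a parallel calculation disposes of the rest. But the torsionful connection is not symmetric in its lower indices, so the opposite ordering needs a separate check, and in fact
\[
(\Delta\Gamma^H)^a{}_{zb}=\bar\Gamma^a{}_{zb}+\tfrac12\bar H^a{}_{zb}=\tfrac12 F_b{}^a+\tfrac12 F_b{}^a=F_b{}^a ,
\]
which is \emph{nonzero}; likewise $(\Delta\Gamma^H)^z{}_{bc}$ picks up $\nabla_{(b}A_{c)}$ and $H$-dependent terms that do not cancel. This imprecision is already present in the paper's statement of the Lemma, so you are not departing from it. It does not affect the downstream Propositions: when these extra components are contracted with a tensor $Y$ (or $K$) having purely base indices, the resulting pieces of the lifted KYT/KS equations either vanish under the required (anti)symmetrisation or reproduce exactly the same lift condition $F^b{}_{[a_1}Y_{a_2\ldots a_p]b}=0$. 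So your core computation matches the paper and is correct; just soften the blanket claim that all remaining components vanish.
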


\begin{proof}
The metric inverses are related by
%%%
\be
\overline{g}^{A B} \, \pd_A \, \pd_B = g^{a b} (\pd_a - A_a \, \pd_z) (\pd_b - A_b \, \pd_z) + \pd_z^2 .
\ee
%%%
For the lower-dimensional metric, the Levi-Civita connection is $\Gamma{^a}{_{b c}} = \tf{1}{2} g^{a d} (\pd_b g_{d c} + \pd_c g_{b d} - \pd_d g_{b c})$.  The total connection including torsion is $\Gamma{^H}{^a}{_{b c}} = \Gamma{^a}{_{b c}} + \tf{1}{2} H{^a}{_{b c}}$.  Similarly, the total connection for the higher-dimensional metric is $\overline{\Gamma}{^{\overline{H}}}{^A}{_{B C}} = \overline{\Gamma}{^A}{_{B C}} + \tf{1}{2} \overline{H}{^A}{_{B C}}$.  The result follows by computation.
\end{proof}

Our main result concerns lifting KYT $p$-forms.
\begin{proposition}
\label{KYTlift}
Suppose that the metric $\df s^2$ admits a KYT $p$-form $Y_{a_1 \ldots a_p}$ with 3-form torsion $H$.  Suppose also that the KYT lift condition $F{^b}{_{[a_1}} Y_{a_2 \ldots a_p] b} = 0$ holds.  Then $Y_{a_1 \ldots a_p}$ is a KYT $p$-form for the metric $\df \overline{s}^2$, with torsion $\overline{H}$.
\end{proposition}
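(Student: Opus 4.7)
The plan is to verify directly that the symmetric form of the KYT equation, $\overline{\nabla}{}^{\overline{H}}{}_{(A} Y_{B_1) B_2 \ldots B_p} = 0$, holds in the lifted spacetime. I extend $Y$ trivially to $D+1$ dimensions: any component with an index equal to $z$ is set to zero, and the remaining components are $z$-independent. Lemma \ref{DeltaGamma} then lets me write $\overline{\nabla}{}^{\overline{H}}{}_A Y_{B_1 \ldots B_p} = \nabla^H_A Y_{B_1 \ldots B_p} - \sum_{i=1}^p (\Delta \Gamma^H){}^C{}_{A B_i} Y_{B_1 \ldots C \ldots B_p}$, reducing the higher-dimensional covariant derivative to $D$-dimensional quantities plus a single explicit correction.

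I first dispose of any case in which one of $A, B_1, \ldots, B_p$ equals $z$. By the lemma the only non-vanishing components of $\Delta \Gamma^H$ are $(\Delta \Gamma^H){}^a{}_{bc} = A_b F_c{}^a$, which require all three indices to be $D$-dimensional; also $\partial_z Y = 0$, and any component of $Y$ with a $z$-index vanishes. A short case-by-case inspection (including cross-terms from the $(A, B_1)$ symmetrization) then shows every term is either trivially zero or contracted with a vanishing component of $Y$, so the KYT equation holds in these cases.

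The substantive case is the one with all $D$-dimensional indices, $A = a$, $B_i = b_i$. Here $\nabla^H_{(a} Y_{b_1) b_2 \ldots b_p} = 0$ by hypothesis, and the remaining contribution is $-A_a T_{b_1 \ldots b_p}$, where $T_{b_1 \ldots b_p} := \sum_{i=1}^p F_{b_i}{}^c Y_{b_1 \ldots c \ldots b_p}$ with $c$ sitting in the $i$-th slot. Using the antisymmetry of $Y$, one verifies that $T$ is itself totally antisymmetric in $b_1, \ldots, b_p$; therefore all $p$ terms coincide under antisymmetrization and $T_{b_1 \ldots b_p} = p F_{[b_1 |}{}^c Y_{| c | b_2 \ldots b_p]}$. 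After relocating the contracted index by the antisymmetry of $Y$, this is precisely the lift condition \eq{KYTliftcond}, and so vanishes. Symmetrizing over $(a, b_1)$ yields $\overline{\nabla}{}^{\overline{H}}{}_{(a} Y_{b_1) b_2 \ldots b_p} = -A_{(a} T_{b_1) b_2 \ldots b_p} = 0$, as required.

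The main obstacle is the combinatorial identification of $T$ with the lift condition; this is the step that pinpoints why \eq{KYTliftcond} is the precise algebraic constraint needed for the lift to preserve the KYT property. Everything else is routine bookkeeping with Lemma \ref{DeltaGamma} and the trivial $z$-extension of $Y$.
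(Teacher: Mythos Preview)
Your proof is correct and follows essentially the same approach as the paper: take the difference of the higher- and lower-dimensional KYT equations, use Lemma~\ref{DeltaGamma} to reduce the discrepancy to the connection correction $A_b F_c{}^a$ contracted into $Y$, and identify the resulting expression with the lift condition~\eq{KYTliftcond}. The paper states this in two lines; you have simply spelled out the case analysis for $z$-indices and the combinatorial identification $T_{b_1\ldots b_p}=p\,F_{[b_1}{}^{c}Y_{|c|b_2\ldots b_p]}$ explicitly, which is a welcome elaboration but not a different argument.
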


\begin{proof}
From the difference of the KYT equations \eq{KYT} for both metrics, we have only the terms involving the connection contracted with the KYT $p$-forms.  Using Lemma \ref{DeltaGamma}, this gives the condition $F{^b}{_{[a_1}} Y_{a_2 \ldots a_p] b} = 0$.
\end{proof}

There is an analogous result for CCKYT $p$-forms.
\begin{corollary}
Suppose that the metric $\df s^2$ admits a CCKYT $p$-form $\wtd{Y}$ with 3-form torsion $H$.  Suppose also that
%%%
\be
\label{CCKYTliftcond}
\gep_{c b_1 \ldots b_{D - p} a_1 \ldots a_{p - 1}} F{^c}{_{a_p}} \wtd{Y}^{a_1 \ldots a_p} = 0 .
\ee
%%%
Then $\wtd{Y} \wedge (\df z + A)$ is a CCKYT $(p + 1)$-form for the metric $\df \overline{s}^2$, with torsion $\overline{H}$.
\end{corollary}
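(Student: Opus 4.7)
The plan is to derive this corollary from Proposition \ref{KYTlift} by applying Hodge duality in both dimension $D$ and dimension $D + 1$.  First I would set $Y := *_D \wtd{Y}$, the $D$-dimensional Hodge dual of $\wtd{Y}$; by the KYT--CCKYT Hodge duality recalled immediately before the discussion of \eq{Ysquared}, $Y$ is then a KYT $(D - p)$-form for $\df s^2$ with torsion $H$.

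Next I would verify that the CCKYT lift condition \eq{CCKYTliftcond} for $\wtd{Y}$ is nothing but the KYT lift condition \eq{KYTliftcond} applied to $Y$, namely $F{^b}{_{[a_1}} Y_{a_2 \ldots a_{D - p}] b} = 0$.  Substituting $Y_{a_2 \ldots a_{D - p} b} = \tf{1}{p!} \gep_{a_2 \ldots a_{D - p} b c_1 \ldots c_p} \wtd{Y}^{c_1 \ldots c_p}$, the antisymmetrization over $[a_1 \ldots a_{D - p}]$ combines with the antisymmetry of $\gep$ in $(a_2 \ldots a_{D - p}, b)$, and dualizing the resulting $(D - p)$-tensor equation with a further $\gep$ (equivalently, applying the standard $\gep \cdot \gep$ contraction identity with $D - p - 1$ contracted indices) converts it into \eq{CCKYTliftcond} up to an overall non-zero combinatorial factor.

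These two ingredients put Proposition \ref{KYTlift} at our disposal: $Y$ is a KYT $(D - p)$-form for $\df \ol{s}^2$ with torsion $\ol{H}$.  Applying the KYT--CCKYT Hodge duality once more, now in $D + 1$ dimensions, $*_{D + 1} Y$ is a CCKYT $(p + 1)$-form for $\df \ol{s}^2$ with torsion $\ol{H}$.  It only remains to identify $*_{D + 1} Y$ with $\wtd{Y} \wedge (\df z + A)$.  In an orthonormal coframe $\{ e^a, \df z + A \}$ adapted to the Kaluza--Klein splitting, $Y$ has no leg along $\df z + A$ and $\ol{g}{^{a b}} = g^{a b}$, so a short direct check using $Y \wedge *_{D + 1} Y = \langle Y, Y \rangle \, \textrm{vol} \wedge (\df z + A)$ identifies $*_{D + 1} Y$ with $(*_D Y) \wedge (\df z + A) = \wtd{Y} \wedge (\df z + A)$.

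The main obstacle is the $\gep$-tensor bookkeeping in the second paragraph: \eq{KYTliftcond} and \eq{CCKYTliftcond} are indeed related by Hodge duality, but making this manifest requires careful tracking of antisymmetrizations and one contraction of two Levi-Civita tensors.  The other steps reduce to invoking previously established results and a direct Hodge star computation in the adapted coframe.
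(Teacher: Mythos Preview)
Your proposal is correct and follows essentially the same route as the paper: Hodge-dualize $\wtd{Y}$ to a KYT $(D-p)$-form $Y$, translate the lift condition, invoke Proposition~\ref{KYTlift}, then dualize back in $D+1$ dimensions. The one technical difference is in the $\gep$-manipulation: the paper obtains the equivalence of \eq{KYTliftcond} and \eq{CCKYTliftcond} in one stroke via the Schouten identity $F{^c}{_{[a_1}}\gep_{a_2\ldots a_{D-p}cb_1\ldots b_p]}\wtd{Y}^{b_1\ldots b_p}=0$ (an over-antisymmetrization on $D+1$ indices), which directly relates the two $(D-p)$-index conditions without introducing a second Levi-Civita tensor; your proposed ``contract with another $\gep$'' step works too but is a slightly roundabout way to reach the same identity, and note that both conditions already carry $D-p$ free indices, so no genuine dualization of the equation is needed. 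You also spell out explicitly the final identification $\ast_{D+1}Y=\wtd{Y}\wedge(\df z+A)$, which the paper leaves implicit.
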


\begin{proof}
KYT forms and CCKYT forms are Hodge duals to each other \cite{Kubiznak:2009qi}, so if $\wtd{Y}$ is a CCKYT $p$-form, then $Y = \star \wtd{Y}$ is a KYT $(D - p)$-form.  Rewrite the condition $F{^b}{_{[a_1}} Y_{a_2 \ldots a_{D - p}] b} = 0$ of Proposition \ref{KYTlift} in terms of $\wtd{Y}$ through
%%%
\be
Y_{a_1 \ldots a_{D - p}} = \tfrac{1}{p!} \gep_{a_1 \ldots a_{D - p} b_1 \ldots b_p} \wtd{Y}^{b_1 \ldots b_p}
\ee
%%%
and use the Schouten identity
%%%
\be
F{^c}{_{[a_1}} \gep_{a_2 \ldots a_{D - p} c b_1 \ldots b_p]} \wtd{Y}^{b_1 \ldots b_p} = 0 .
\ee
%%%
This gives the condition \eq{CCKYTliftcond}.
\end{proof}

If $\wtd{Y}_1$ and $\wtd{Y}_2$ are CCKYT forms, then $\wtd{Y}_1 \wedge \wtd{Y}_2$ is also a CCKYT form \cite{Kubiznak:2009qi}, which we show can be lifted if both $\wtd{Y}_1$ and $\wtd{Y}_2$ can be lifted.
\begin{corollary}
\label{PCCKYTlift}
Suppose that the metric $\df s^2$ admits a CCKYT $p$-form $\wtd{Y}_1$ and a CCKYT $q$-form $\wtd{Y}_2$ with 3-form torsion $H$, and that they satisfy the conditions to lift to CCKYT forms $\wtd{Y}_1 \wedge (\df z + A)$ and $\wtd{Y}_2 \wedge (\df z + A)$ on $\df \overline{s}^2$.  Then $\wtd{Y}_1 \wedge \wtd{Y}_2$ also satisfies the condition to lift to a $(p + q + 1)$-form $\wtd{Y}_1 \wedge \wtd{Y}_2 \wedge (\df z + A)$ on $\df \overline{s}^2$.
\end{corollary}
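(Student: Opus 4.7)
The plan is to recast the CCKYT lift condition \eq{CCKYTliftcond} as the vanishing of a natural derivation of the exterior algebra applied to $\wtd{Y}$, and then exploit the Leibniz rule for this derivation on wedge products.

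First, I would dualize the $\gep$-tensor in \eq{CCKYTliftcond} to rewrite the lift condition for a CCKYT $p$-form $\wtd{Y}$ in the equivalent form
\[
F{^{[c}}{_d}\, \wtd{Y}^{a_1 \ldots a_{p-1}] d} = 0 ,
\]
with antisymmetrization over the $p$ free indices $(c, a_1, \ldots, a_{p-1})$ and $d$ summed.  This is a standard $\gep\gep$-manipulation of the kind already used in the proof of the preceding corollary, and requires no new ingredients beyond a Schouten identity.

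Next, I would introduce the Leibniz extension of the bundle endomorphism $F^a{}_b$ to contravariant tensors,
\[
(\tau T)^{a_1 \ldots a_r} = \sum_{i=1}^{r} F{^{a_i}}{_b}\, T^{a_1 \ldots a_{i-1} b a_{i+1} \ldots a_r} .
\]
A short index check shows that $\tau$ preserves total antisymmetry, so on an antisymmetric $\wtd{Y}$ one may move the summed index into the first slot to obtain $(\tau \wtd{Y})^{a_1 \ldots a_p} = (-1)^{p-1} p\, F{^{[a_1}}{_d}\, \wtd{Y}^{a_2 \ldots a_p] d}$, which vanishes precisely when the displayed lift condition holds.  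Hence \eq{CCKYTliftcond} is equivalent to the single operator equation $\tau(\wtd{Y}) = 0$.

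The key step is now that $\tau$ is a derivation of the exterior algebra,
\[
\tau(\wtd{Y}_1 \wedge \wtd{Y}_2) = \tau(\wtd{Y}_1) \wedge \wtd{Y}_2 + \wtd{Y}_1 \wedge \tau(\wtd{Y}_2) ,
\]
since it is the Leibniz extension of a $(1,1)$-tensor to the tensor algebra and commutes with total antisymmetrization.  The hypotheses give $\tau(\wtd{Y}_1) = \tau(\wtd{Y}_2) = 0$, so $\tau(\wtd{Y}_1 \wedge \wtd{Y}_2) = 0$, which is the lift condition \eq{CCKYTliftcond} for the CCKYT $(p+q)$-form $\wtd{Y}_1 \wedge \wtd{Y}_2$.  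Applying the preceding corollary then yields that $\wtd{Y}_1 \wedge \wtd{Y}_2 \wedge (\df z + A)$ is a CCKYT $(p + q + 1)$-form for $\df \overline{s}^2$.  The only mildly technical part is the translation of \eq{CCKYTliftcond} into the derivation equation $\tau(\wtd{Y}) = 0$; once this identification is made, the Leibniz rule and the conclusion are immediate.
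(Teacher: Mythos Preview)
Your proposal is correct, but the route differs from the paper's.  The paper never dualizes the $\gep$-form of \eq{CCKYTliftcond} and never introduces any derivation on the exterior algebra.  Instead it works directly with the $\gep$-tensor: it writes the lift condition for $\wtd{Y}_1$ with $D - p$ free $b$-indices, relabels $q$ of those free indices as $a_{p+1}, \ldots, a_{p+q}$, contracts them against $(\wtd{Y}_2)^{a_{p+1} \ldots a_{p+q}}$, and then antisymmetrizes over all the $a$-indices to assemble $(\wtd{Y}_1 \wedge \wtd{Y}_2)^{a_1 \ldots a_{p+q}}$ out of $\wtd{Y}_1^{a_1 \ldots a_p} \wtd{Y}_2^{a_{p+1} \ldots a_{p+q}}$.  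A relabelling of which $a$-index sits on $F$ then produces exactly \eq{CCKYTliftcond} for $\wtd{Y}_1 \wedge \wtd{Y}_2$.

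What each approach buys: your derivation formulation is more conceptual and makes the Leibniz behaviour on wedge products manifest, so the extension to the tower $\wtd{Y}^{\wedge j}$ used in Section~\ref{liftsol} is a one-line consequence of $\tau(\wtd{Y}) = 0$.  The paper's $\gep$-manipulation is more hands-on but in fact uses only the lift condition for $\wtd{Y}_1$ (the hypothesis on $\wtd{Y}_2$ is never invoked in its proof), so it silently establishes a slightly stronger statement than the one you prove, where both $\tau(\wtd{Y}_1) = 0$ and $\tau(\wtd{Y}_2) = 0$ are needed.  Your dualization of \eq{CCKYTliftcond} to $F{^{[c}}{_d}\, \wtd{Y}^{a_1 \ldots a_{p-1}] d} = 0$ and the identification $(\tau \wtd{Y})^{a_1 \ldots a_p} = (-1)^{p-1} p\, F{^{[a_1}}{_d}\, \wtd{Y}^{a_2 \ldots a_p] d}$ are both correct, and the derivation property of the Leibniz extension of a $(1,1)$-tensor on $\Lambda^\bullet$ is standard, so there is no gap.
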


\begin{proof}
By the assumption that $\wtd{Y}_1$ lifts, we have
%%%
\be
\gep_{c b_1 \ldots b_{D - p - q} a_{p + 1} \ldots a_{p + q} a_1 \ldots a_{p - 1}} F{^c}{_{a_p}} (\wtd{Y}{_1})^{a_1 \ldots a_p} = 0 .
\ee
%%%
Contract with $(\wtd{Y}_2)^{a_{p + 1} \ldots a_{p + q}}$ and antisymmetrize the $a$ indices to obtain
%%%
\be
\gep_{c b_1 \ldots b_{D - p - q} a_1 \ldots a_{p + q - 1}} F{^c}{_{a_{p + q}}} (\wtd{Y}_1 \wedge \wtd{Y}_2)^{a_1 \ldots a_{p + q}} = 0 ,
\ee
%%%
which is the condition for $\wtd{Y}_1 \wedge \wtd{Y}_2$ to lift.
\end{proof}

Symmetric KS tensors can be lifted, again subject to an algebraic condition.
\begin{proposition}
Suppose that the metric $\df s^2$ admits a rank-$p$ KS tensor $K_{a_1 \ldots a_p}$.  Suppose also that the KS lift condition
%%%
\be
\label{FK}
F{^b}{_{(a_1}} K_{a_2 \ldots a_p) b} = 0
\ee
%%%
holds.  Then $K_{a_1 \ldots a_p}$ is a rank-$p$ KS tensor for the metric $\df \overline{s}^2$.
\end{proposition}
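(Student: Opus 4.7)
The plan is to mirror the argument of Proposition \ref{KYTlift}, with symmetrization replacing antisymmetrization. Extend $K$ to a $(D + 1)$-dimensional symmetric tensor by declaring every component with a $z$-index to vanish, and aim to show $\overline{\nabla}_{(A_0} K_{A_1 \ldots A_p)} = 0$.

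A useful preliminary observation is that the torsion corrections to $\nabla{^T}{_{(a_0}} K_{a_1 \ldots a_p)}$ drop out identically: the correction $T{^c}{_{a_0 a_i}}$ is antisymmetric in the pair $(a_0, a_i)$ that the symmetrization then identifies. Hence $\nabla{^T}{_{(a_0}} K_{a_1 \ldots a_p)} = \nabla_{(a_0} K_{a_1 \ldots a_p)}$ for any 3-form $T$, and likewise in $(D + 1)$ dimensions. The difference of KS operators between the two dimensions is therefore governed directly by Lemma \ref{DeltaGamma}, exactly as in the KYT case.

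Next, examine the equation component by component according to how many of $A_0, \ldots, A_p$ equal $z$. When all are $a$-type, the difference reduces, via Lemma \ref{DeltaGamma}, to a multiple of $A_{(a_0} F{^b}{_{a_1}} K_{a_2 \ldots a_p) b}$; partitioning the full symmetrization according to which index carries $A$, each term is proportional to the KS lift condition (\ref{FK}) on the remaining $p$ indices, so the sum vanishes. When at least one index is $z$, the required components of $\overline{\Gamma}{^{\overline{H}}}$ involving $z$ must be handled directly; contributions in which $K$ acquires a $z$-index vanish automatically, while the surviving terms again collapse via symmetrization to multiples of $F{^b}{_{(a_1}} K_{a_2 \ldots a_p) b}$ and vanish by (\ref{FK}).

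The main technical hurdle will be the mixed case with exactly one $z$-index, where the contributions from $\overline{\nabla}_z K_{a_1 \ldots a_p}$ and from $\overline{\nabla}_{a_i} K_{a_1 \ldots z \ldots a_p}$ need to be combined carefully so that all corrections funnel into the single algebraic condition (\ref{FK}). Once this bookkeeping is dealt with, the remaining cases are either direct applications of the lift condition or trivial from $K$ having no $z$-components.
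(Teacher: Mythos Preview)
Your approach is correct and matches the paper's (very terse) two-sentence proof: take the difference of the KS equations, invoke Lemma~\ref{DeltaGamma}, and read off the lift condition. Your observation that the torsion corrections drop out under full symmetrization is a useful justification for applying Lemma~\ref{DeltaGamma} (which is stated for the torsion connection) to the Levi-Civita KS equation; the paper glosses over this point.

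However, your concern about the mixed $z$-index cases is misplaced. Lemma~\ref{DeltaGamma} asserts that the \emph{only} nonvanishing component of $\Delta\Gamma^H$ is $(\Delta\Gamma^H){^a}{_{bc}}$, so every component of $\overline{\Gamma}{^{\overline{H}}}$ carrying a $z$-index vanishes identically. Combined with $\partial_z K = 0$ and $K_{\ldots z \ldots} = 0$, this forces every $z$-index component of $\overline{\nabla}{^{\overline{H}}}_{(A_0} K_{A_1 \ldots A_p)}$ to be zero outright, with no appeal to the lift condition~\eq{FK}. The only case with content is the all-$a$ one, which you handle correctly; there is no ``main technical hurdle''.
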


\begin{proof}
From the difference of the KS equations \eq{KS} for both metrics, we have only the terms involving the connection contracted with the KS tensors.  Using Lemma \ref{DeltaGamma}, this gives the condition $F{^b}{_{(a_1}} K_{a_2 \ldots a_p) b} = 0$.
\end{proof}

Constants of motion are independent if they are in involution, i.e.\ Poisson commute.  For the constants of motion $K_{a_1 \ldots a_p} P^{a_1} \ldots P^{a_p}$ associated with rank-$p$ KS tensors, this is equivalent to the vanishing of the Schouten--Nijenhuis brackets of the KS tensors.  The lift preserves the vanishing of these brackets.  Note that Poisson brackets are defined for quantities on phase space, rather than configuration space, so the corresponding Schouten--Nijenhuis brackets involve partial derivatives.  These can be written in a manifestly covariant way as standard covariant derivatives, rather than torsion-modified covariant derivatives as suggested in \cite{Houri:2010fr}.

CKYT forms are associated to first-order symmetry operators for torsion-modified Dirac equations.  However, when torsion is included, in order to construct a symmetry operator, there are further anomaly terms that must vanish \cite{Houri:2010qc, Kubiznak:2010ig}.  It is straightforward to see that the lift preserves the vanishing of anomalies, and so the symmetry operators can be lifted.

%%%%%%%%%%%%%%%%%%%%%%%%%%%%%%%%%%%%%%%

\subsection{Solution of lift condition}
\label{liftsol}

%%%%%%%%%%%%%%%%%%%%%%%%%%%%%%%%%%%%%%%

A significant application of Corollary \ref{PCCKYTlift} is to a non-degenerate CCKYT 2-form $\wtd{Y}$, also known as a principal CKYT form.  By taking multiple powers, $\wtd{Y}^{(j)} = \fr{1}{j!} \wtd{Y} ^{\wedge j}$, for $j = 1, 2, \ldots, \lfloor D/2 \rfloor$, we obtain a tower of CCKYT forms, which can be Hodge dualized to a tower of KYT forms.  To show that all $\wtd{Y}^{(j)}$ satisfy the CCKYT lift condition, it suffices to check $\wtd{Y}$ only.

To give the explicit solution for the lift condition of a non-generate CCKYT 2-form, which is Hodge dual to a KYT $(D - 2)$-form, we work in a Darboux basis for $\wtd{Y}$.  We define $\varepsilon = 0, 1$ by $D = 2 n + \varepsilon$, according to whether the dimension is even or odd.  There are pairs of vielbeins $e^\mu$, $e^{\hat{\mu}}$, $\mu = 1, \ldots, n$, with an extra unpaired vielbein $e^0$ present only if $D$ is odd.  Non-degeneracy means that the eigenvalues of the endomorphism $\wtd{Y}{^a}{_b}$ are functionally independent in some domain.  At a generic point, they must come in non-zero and distinct pairs, plus a single zero eigenvalue in odd dimensions.  The CCKYT 2-form takes the form
%%%
\be
\label{Ytilde}
\wtd{Y} = \sum_{\mu = 1}^n x_\mu e^\mu \wedge e^{\hat{\mu}} ,
\ee
%%%
where $x_\mu$ are constants at that point, which are distinct in Euclidean signature.  The CCKYT lift condition \eq{CCKYTliftcond} implies that $F$ is a linear combination (at that point) of $e^\mu \wedge e^{\hat{\mu}}$, $\mu = 1, \ldots, n$.  In 3 dimensions, $F$ is a multiple of $\wtd{Y}$.  In 4 dimensions, $F$ is a linear combination of $Y$ and $\wtd{Y}$, as discussed in \cite{Carter:1987id}, where the solution is expressed in terms of a complex self-dual 2-form.

For a KYT 2-form $Y$, in a Darboux basis for $Y$ itself, we have
%%%
\be
Y = \sum_{\mu = 1}^n x_\mu e^\mu \wedge e^{\hat{\mu}} .
\ee
%%%
Similarly, we find that $F$ is a linear combination of $e^\mu \wedge e^{\hat{\mu}}$, $\mu = 1, \ldots, n$.

For a KYT $p$-form with $3 \leq p \leq D - 3$, the lift condition has many more components than $F$, and generically implies that $F = 0$.  Only in special cases, such as when the KYT $p$-form is induced by a CCKYT 2-form, can there be a non-trivial $F$.

%%%%%%%%%%%%%%%%%%%%%%%%%%%%%%%%%%%%%%%%%%%

\subsection{Applications of lift condition}
\label{lift}

%%%%%%%%%%%%%%%%%%%%%%%%%%%%%%%%%%%%%%%%%%%

The KYT lift condition \eq{KYTliftcond} also appears in several other contexts, which we summarize here.  Even if the particular Kaluza--Klein lift of a solution is not physically motivated by the theories, the properties below will still apply.

%%%%%%%%%%%%%%%%%%%%%%%%%%%%%%%%%%%%%%%

\subsubsection{Charged particle motion}

%%%%%%%%%%%%%%%%%%%%%%%%%%%%%%%%%%%%%%%

A particle with momentum $P_a$ charged under the gauge field $A$, with electric charge $e$, is subject to the Lorentz force law $P^b \nabla_b P^a = e F{^a}{_b} P^b$.  For a rank-2 KS tensor $K_{a b}$, we have
%%%
\be
P^a \nabla_a (K_{b c} P^b P^c) = 2 e K_{b c} F{^b}{_a} P^a P^c .
\ee
%%%
Although $K_{a b} P^a P^b$ is conserved along the paths of uncharged particles, it is not in general conserved along the paths of charged particles.  However, suppose in addition that the KS tensor is the square of a KYT $p$-form, as in \eq{Ysquared}, and that the lift condition \eq{KYTliftcond} is satisfied.  Then we have
%%%
\begin{align}
F{^b}{_a} K_{c b} & = F{^b}{_a} Y{^{d_1 \ldots d_{p - 1}}}{_c} Y_{d_1 \ldots d_{p - 1} b} \nnr
%%%
& = (p - 1) F{^b}{_{d_{p - 1}}} Y{^{d_1 \ldots d_{p - 1}}}{_c} Y_{d_1 \ldots d_{p - 2} a b} \nnr
%%%
& = (p - 1) F{^{d_{p - 1}}}{_b} Y{^{d_1 \ldots d_{p - 2} b}}{_c} Y_{d_1 \ldots d_{p - 2} a d_{p - 1}} \nnr
%%%
& = F_{c b} Y{^{d_1 \ldots d_{p - 2} b}}{_{d_{p - 1}}} Y_{d_1 \ldots d_{p - 2} a}{^{d_{p - 1}}} \nnr
%%%
& = - F{^b}{_c} K_{a b} ,
\end{align}%%%0
and so
%%%
\be
\label{FK2}
F{^b}{_{(a}} K_{c) b} = 0 .
\ee
%%%
It follows that $K_{a b} P^a P^b$ is conserved along the paths of charged particles.  For a KY 2-form, the relevance of the lift condition \eq{KYTliftcond} for charged particle motion was shown in \cite{Hughston:1972qf}, and a slightly lengthier derivation for $p$-forms was given in \cite{Acik:2008wv}.  Note that \eq{FK2} is the lift condition \eq{FK} for a rank-2 KS tensor.

%%%%%%%%%%%%%%%%%%%%%%%%%%%%%%

\subsubsection{Dirac operator}

%%%%%%%%%%%%%%%%%%%%%%%%%%%%%%

Consider the Dirac equation for a charged spinor in a curved background with an electromagnetic field.  The Dirac operator is $\mathcal D = \gamma^a (\nabla_a - \im e A_a)$.  If there is a KY 2-form $Y_{a b}$ that satisfies the constraint
%%%
\be
\label{Diraccondition}
F{^c}{_{[a}} Y_{b] c} = 0 ,
\ee
%%%
then one can construct a gauge-invariant operator that commutes with $\mathcal D$ \cite{Carter:1979fe}, giving a good quantum number.  For an uncharged spinor, the construction has been generalized to KY $p$-forms \cite{Cariglia:2003kf}, and further including torsion to KYT $p$-forms \cite{Houri:2010fr}.  For a charged spinor, the construction has been generalized to KY $p$-forms \cite{Acik:2008wv}, and to KYT $p$-forms \cite{Kubiznak:2010ig}.  For convenience, we give here a self-contained demonstration of the result to include charge and arbitrary rank, but without torsion, largely following \cite{Houri:2010fr}.

\begin{proposition}
\label{KYop}
Let $Y_{a_1 \ldots a_p}$ be a KY $p$-form, and define the operator
%%%
\be
K_Y = \gamma^{b_1 \ldots b_{p - 1}} Y{^a}{_{b_1 \ldots b_{p - 1}}} (\nabla_a - \im e A_a) + \fr{1}{2 (p + 1)} \gamma^{b_1 \ldots b_{p + 1}} \nabla_{b_1} Y_{b_2 \ldots b_{p + 1}} .
\ee
%%%
Suppose also that
%%%
\be
F{^b}{_{[a_1}} Y_{a_2 \ldots a_p] b} = 0 .
\ee
%%%
Then $K_Y$ graded anti-commutes with the Dirac operator $\mathcal D$:
%%%
\be
\mathcal D K_Y + (-1)^p K_Y \mathcal D = 0 .
\ee
%%%
\end{proposition}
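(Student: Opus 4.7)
The plan is to compute $\mathcal D K_Y + (-1)^p K_Y \mathcal D$ directly as a differential operator on spinors, organise the result by its order in $D_a = \nabla_a - \im e A_a$, and verify that each order vanishes. The main tools are the Clifford expansion
\be
\gamma^c \gamma^{b_1 \ldots b_{p-1}} = \gamma^{c b_1 \ldots b_{p-1}} + \sum_{j=1}^{p-1} (-1)^{j-1} g^{c b_j} \gamma^{b_1 \ldots \widehat{b_j} \ldots b_{p-1}} ,
\ee
its reverse, and the curvature commutator $[D_a, D_c] \psi = \tf{1}{4} R_{a c d e} \gamma^d \gamma^e \psi - \im e F_{a c} \psi$.

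At second order in $D$, splitting $D_c D_a$ into its symmetric and antisymmetric parts produces two kinds of Clifford factor: the graded-anticommutator $\gamma^c \gamma^{b_1 \ldots b_{p-1}} + (-1)^p \gamma^{b_1 \ldots b_{p-1}} \gamma^c$, which is a sum of traces of the shape $g^{c b_j} \gamma^{b_1 \ldots \widehat{b_j} \ldots b_{p-1}}$, and the graded-commutator, which contains the top-rank Clifford element $\gamma^{c b_1 \ldots b_{p-1}}$. The symmetric-derivative contribution, after the trace $g^{c b_j}$ is absorbed, reduces to $Y^a{}_{b_1 \ldots b_{p-1}} D_{(b_j} D_{a)} \psi$, which vanishes by the antisymmetry of $Y$ in $(a, b_j)$ against the symmetric derivative. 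The antisymmetric-derivative contribution feeds into the zeroth-order terms via the curvature commutator above.

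At first order, whenever a derivative lands on $Y$, the KY equation makes $\nabla_c Y^a{}_{b_1 \ldots b_{p-1}}$ totally antisymmetric in $(c, a, b_1, \ldots, b_{p-1})$, so after Clifford expansion the trace pieces drop out and only the top-rank contributions survive. These reassemble into expressions proportional to $\gamma^{b_1 \ldots b_{p+1}} \nabla_{b_1} Y_{b_2 \ldots b_{p+1}} D_c$, and cancel precisely against the analogous contribution from the correction piece $\tf{1}{2(p+1)} \gamma^{b_1 \ldots b_{p+1}} \nabla_{b_1} Y_{b_2 \ldots b_{p+1}}$ of $K_Y$ paired with $\mathcal D$ on the opposite side; fixing the coefficient $\tf{1}{2(p+1)}$ is precisely what effects this cancellation.

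At zeroth order, the three contributions are (i) $\nabla \nabla Y$ from $\mathcal D$ hitting the correction piece of $K_Y$, (ii) the Riemann part of $[D_a, D_c]$, and (iii) the $-\im e F_{a c}$ part of the same commutator. The standard KY integrability identities, obtained by differentiating $\nabla_{(a} Y_{b) \ldots} = 0$ and applying the Ricci identity, make (i) and (ii) cancel. The gauge piece (iii) organises, after Clifford expansion, into a leading rank-$p$ term proportional to $F^b{}_{[a_1} Y_{a_2 \ldots a_p] b} \gamma^{a_1 \ldots a_p}$, which vanishes by the hypothesis \eq{KYTliftcond}; the lower-rank $\gamma$ traces reduce, through the antisymmetry of $Y$ on its own indices, to the same antisymmetric combination. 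I expect the main obstacle to be the Clifford-index bookkeeping at zeroth order, namely verifying that the $F$-contribution collects exactly into the antisymmetric combination of the lift condition, rather than a strictly stronger one, and that the Riemann contributions cancel without further metric assumptions; the uncharged analogue is treated in \cite{Cariglia:2003kf}, so the present calculation reproduces it and adds only the new $-\im e F_{a c}$ piece, which is the term controlled by \eq{KYTliftcond}.
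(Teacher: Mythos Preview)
Your proposal is correct and arrives at the same condition, but the organisation differs from the paper's in a useful way.  The paper separates the graded anticommutator by powers of the charge $e$: the $e = 0$ case is taken as known from the uncharged literature \cite{Cariglia:2003kf, Houri:2010fr}, the $e^2$ terms cancel trivially (they contain no derivatives and the Clifford factors pair off), and only the terms linear in $e$ are checked explicitly.  Those linear terms are then sorted according to whether the surviving derivative lands on $\psi$, on $Y$, or on $A$; the first two vanish by the antisymmetry of $Y$ and the KY equation respectively, and the third is precisely $\gamma^{c b_1 \ldots b_{p-1}} Y{^a}{_{[b_1 \ldots b_{p-1}}} F_{c] a}$.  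Your route instead keeps the gauge-covariant derivative $D_a$ intact and organises by derivative order, so that $F$ enters only through $[D_c, D_a]$.  This is more self-contained but re-derives the Riemann cancellations that the paper simply cites; the paper's expansion in $e$ is shorter precisely because it isolates the new content.

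One small correction to your zeroth-order discussion: in your own organisation the trace pieces of the Clifford product already vanish at second order (they multiply $D_{(c} D_{a)}$, which contracts against the antisymmetric pair in $Y$), \emph{before} the commutator is taken.  Hence the $F$ contribution comes solely from the top-rank element $\gamma^{c b_1 \ldots b_{p-1}} Y{^a}{_{b_1 \ldots b_{p-1}}} F_{c a}$, which is automatically antisymmetrised to the lift condition by the Clifford factor.  There are no ``lower-rank $\gamma$ traces'' involving $F$ to reduce, so your hedge there is unnecessary; the condition \eq{KYTliftcond} is exactly what is needed, not merely sufficient.
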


\begin{proof}
The case $e = 0$ is already known explicitly, and the terms proportional to $e^2$ are trivial.  We therefore only need to consider the terms linear in $e$, checking that
%%%
\begin{align}
& \gamma^c \gamma^{b_1 \ldots b_{p - 1}} [A_c Y{^a}{_{b_1 \ldots b_{p - 1}}} \nabla_a \psi + \nabla_c (Y{^a}{_{b_1 \ldots b_{p - 1}}} A_a \psi)] \nnr
%%%
& + (-1)^p \gamma^{b_1 \ldots b_{p - 1}} \gamma^c Y{^a}{_{b_1 \ldots b_{p - 1}}} [\nabla_a (A_c \psi) + A_a \nabla_c \psi] \nnr
%%%
& + \fr{1}{2 (p + 1)} [\gamma^c \gamma^{b_1 \ldots b_{p + 1}} + (-1)^p \gamma^{b_1 \ldots b_{p + 1}} \gamma^c ] A_c (\nabla_{b_1} Y_{b_2 \ldots b_{p + 1}}) \psi = 0 .
\end{align}
%%%
Note the gamma matrix identities
%%%
\begin{align}
\gamma^c \gamma^{b_1 \ldots b_{p - 1}} & = \gamma^{c b_1 \ldots b_{p - 1}} + (p - 1) g^{c [b_1} \gamma^{b_2 \ldots b_{p - 1}]} , \nnr
%%%
\gamma^{b_1 \ldots b_{p - 1}} \gamma^c & = (-1)^p [- \gamma^{c b_1 \ldots b_{p - 1}} + (p - 1) g^{c [b_1} \gamma^{b_2 \ldots b_{p - 1}]}] .
\end{align}
%%%
Consider separately the terms involving derivatives of $\psi$, $Y$ and $A$.  Firstly, we have
%%%
\be
2 (p - 1) g^{c [b_1} \gamma^{b_2 \ldots b_{p - 1}]} Y{^a}{_{b_1 \ldots b_{p - 1}}} (A_c \nabla_a + A_a \nabla_c) \psi = 0 ,
\ee
%%%
since $Y{^{(a c)}}{_{b_2 \ldots b_{p - 1}}} = 0$.  Secondly, we have
%%%
\begin{align}
& [\gamma^{c b_1 \ldots b_{p - 1}} + (p - 1) g^{c [b_1} \gamma^{b_2 \ldots b_{p - 1}]}] A_a (\nabla_c Y{^a}{_{b_1 \ldots b_{p - 1}}}) \psi + g^{c [b_1} \gamma^{b_2 b_3 \ldots b_{p + 1}]} A_c (\nabla_{b_1} Y_{b_2 b_3 \ldots b_{p + 1}}) \psi \nnr
%%%
& = \gamma^{c b_1 \ldots b_{p - 1}} A^a (\nabla_c Y_{a b_1 \ldots b_{p - 1}}) \psi + \gamma^{c b_1 \ldots b_{p - 1}} A^a (\nabla_{[a} Y_{c b_1 \ldots b_{p - 1}]}) \psi \nnr
%%%
& = 0 ,
\end{align}
%%%
in the first step using $\nabla_c Y{^{a c}}{_{b_2 \ldots b_{p - 1}}} = 0$ in the first term and relabelling in the second term.  Finally, we require
%%%
\begin{align}
0 & = \gamma^{c b_1 \ldots b_{p - 1}} Y{^a}{_{b_1 \ldots b_{p - 1}}} F_{c a} \psi + g^{c [b_1} \gamma^{b_2 \ldots b_{p - 1}]} Y{^a}{_{b_1 \ldots b_{p - 1}}} (\nabla_c A_a + \nabla_a A_c) \psi \nnr
%%%
& = \gamma^{c b_1 \ldots b_{p - 1}} Y{^a}{_{[b_1 \ldots b_{p - 1}}} F_{c] a} ,
\end{align}
%%%
since $Y{^{(a c)}}{_{b_2 \ldots b_{p - 1}}} = 0$.  This gives the condition \eq{Diraccondition}.
\end{proof}
%%%
The rather lengthier generalization to include torsion has been given in \cite{Kubiznak:2010ig}.

%%%%%%%%%%%%%%%%%%%%%%%%%%%%%%%%%%%%%%%

\subsubsection{Worldline supersymmetry}

%%%%%%%%%%%%%%%%%%%%%%%%%%%%%%%%%%%%%%%

For a pseudo-classical description of a spinning particle, there is a connection between between KY forms and enhanced worldline supersymmetry, as a superinvariant can be constructed from a KY form.  Originally shown for an uncharged spinning particle in the case of a KY 2-form \cite{Gibbons:1993ap}, the connection has been generalized for charged particles to arbitrary rank $p$-forms without torsion \cite{Tanimoto:1995np}, and 2-forms with torsion \cite{Rietdijk:1995ye}.  In these two generalizations, the lift condition \eq{KYTliftcond} is necessary.  The case of arbitrary rank with torsion but for an uncharged particle, which places no constraint between the electromagnetic field and the KYT $p$-form, was considered in \cite{Kubiznak:2009qi}.  We expect that the lift condition \eq{KYTliftcond} is required for constructing a superinvariant in the general case of a charged particle and an arbitrary rank KYT $p$-form.  

The lift condition \eq{KYTliftcond} also appears as a condition for the invariance of a general 2-derivative worldline action with $\mathcal N = 1$ supersymmetry \cite{Papadopoulos:2011cz}.  Suppose that the action is written in terms of $\mathcal N = 1$ superfields $X^a$, which are maps from $\mathcal N = 1$ superspace $\Xi^{1 | 1}$ to the target space.  If we perform a transformation $\delta X{^a} = Y{^a}{_{b_1 \ldots b_{p - 1}}} D X^{b_1} \ldots D X^{b_{p - 1}}$ for a target space $p$-form $Y$, then $Y$ must be a KYT $p$-form and satisfies the lift condition \eq{KYTliftcond}.

%%%%%%%%%%%%%%%%%%%%%%%%%%%%%

\section{Black hole examples}

%%%%%%%%%%%%%%%%%%%%%%%%%%%%%

The black holes in supergravity known to admit KYT forms are \cite{Chow:2013gba, Mei:2007bn, Chow:2008ip, Chow:2007ts, Chow:2008fe}.  These all satisfy the Killing tensor lift conditions.  We give several explicit examples here.

%%%%%%%%%%%%%%%%%%%%%%%%%%%%%%%%%%%%%%%%%%%%%%%%%%%%%%%%%

\subsection{Charged Kerr--NUT solution in all dimensions}

%%%%%%%%%%%%%%%%%%%%%%%%%%%%%%%%%%%%%%%%%%%%%%%%%%%%%%%%%

Consider a bosonic string theory in arbitrary spacetime dimensions $D \geq 4$.  Two noteworthy conformal frames are Einstein frame and string frame \cite{Callan:1985ia}.  The $D$-dimensional theory we consider also has a gauge field, and has Einstein frame Lagrangian
%%%
\be
\label{stringEinstein}
\cL_D = R \star 1 - \tf{1}{2} \star \df \gvf \wedge \df \gvf - X^{-2} \star F \wedge F - \tf{1}{2} X^{-4} \star H \wedge H ,
\ee
%%%
where the 3-form field strength is $H = \df B - A \wedge F$ and $X = \expe{-\gvf/\sr{2 (D - 2)}}$.  The string frame metric $\df s^2$ and the Einstein frame metric $\df s_{\textrm{E}}^2$ are related by
%%%
\be
\df s^2 = X^2 \, \df s_{\textrm{E}}^2 .
\ee
%%%
The $D$-dimensional string frame Lagrangian is
%%%
\be
\cL_D = X^{-(D-2)} [R \star 1 + \tf{1}{2} (D - 2) \star \df \gvf \wedge \df \gvf - \star F \wedge F - \tf{1}{2} \star H \wedge H] .
\ee
%%%
We lift to a $(D + 1)$-dimensional theory using the ansatz for the $(D + 1)$-dimensional fields
%%%
\begin{align}
\df \overline{s}^2 & = \df s^2 + (\df z + A)^2 , & \overline{H} = H + F \wedge (\df z + A) ,
\end{align}
%%%
where the $\df \overline{s}^2$ is the $(D + 1)$-dimensional string frame metric, $\df s^2$ is the $D$-dimensional string frame metric, and $\overline{H} = \df \overline{B}$ is the $(D + 1)$-dimensional 3-form field strength.  This motivates the ansatz \eq{KKansatz} used previously.  We now work with the $(D + 1)$-dimensional theory and remove the bars on the fields.  The $(D+1)$-dimensional string frame Lagrangian is
%%%
\be
\cL_{D + 1} = X^{-(D-2)} [R \star 1 + \tf{1}{2} (D - 2) \star \df \gvf \wedge \df \gvf - \tf{1}{2} \star H \wedge H] .
\ee
%%%
The $(D + 1)$-dimensional string frame metric $\df s^2$ and Einstein frame metric $\df s_{\textrm{E}}^2$ are related by
%%%
\be
\df s^2 = X^{2 (D - 2)/(D - 1)} \, \df s_{\textrm{E}}^2 .
\ee
%%%
The $(D + 1)$-dimensional Einstein frame Lagrangian is
%%%
\be
\cL_{D + 1} = R \star 1 - \fr{D - 2}{2 (D - 1)} \star \df \gvf \wedge \df \gvf - \fr{1}{2} X^{-4 (D - 2)/(D - 1)} \star H \wedge H .
\ee
%%%

For these theories, there is a charged Kerr--NUT solution in arbitrary dimensions $D \geq 4$ \cite{Chow:2008fe}.  For our purposes here, we do not need the full details of the solution, only some general features.  As emphasized in \cite{Chow:2008fe}, there are Killing tensors in string frame, but in general only conformal Killing tensors in other frames, such as Einstein frame.  After analytic continuation, using vielbeins as discussed in Section \ref{liftsol}, the $D$-dimensional string frame metric takes the form
%%%
\be
\df s^2 = \sum_{\mu = 1}^n (e^\mu e^\mu + e^{\hat{\mu}} e^{\hat{\mu}}) + \varepsilon e^0 e^0 .
\ee
%%%
There is a CCKYT 2-form of the form \eq{Ytilde} \cite{Houri:2010fr}, and the gauge field strength takes the form
%%%
\be
F = \fr{c}{s} \sum_{\mu = 1}^n H_\mu e^\mu \wedge e^{\hat{\mu}} .
\ee
%%%
$c$, $s$, $H_\mu$ and $x_\mu$ are quantities whose definitions we do not need here.  $F$ takes the form discussed in Section \ref{liftsol}, so the condition \eq{CCKYTliftcond} for the CCKYT 2-form to lift is satisfied.  Its powers $\wtd{Y}^{(j)} = \fr{1}{j!} \wtd{Y} ^{\wedge j}$, for $j = 1, 2, \ldots, \lfloor D/2 \rfloor$ give a tower of CCKYT forms and Hodge dual KYT forms that lift to $D + 1$ dimensions.  If $D \leq 8$, then the solution can be further lifted from $D + 1$ dimensions to 10-dimensional string theory by taking a direct product with a torus, and the Killing tensors trivially lift.

The CCKYT lift condition \eq{CCKYTliftcond} continues to hold for the known AdS generalizations in gauged supergravity, for which only a potential is added to the Lagrangian, in $D = 4$ \cite{Chong:2004na}, $D = 5$ \cite{Chong:2005da}, $D = 6$ \cite{Chow:2008ip} and $D = 7$ \cite{Chow:2007ts}.  When the potential of gauged supergravity is included, the lifted theory does not correspond to a supergravity, so is of less interest.  Nevertheless, the lift condition's other consequences that we discussed in Section \ref{lift} will remain.

%%%%%%%%%%%%%%%%%%%%%%%%%%%%%%%%%%%%%%%%%

\subsection{Five-dimensional black holes}

%%%%%%%%%%%%%%%%%%%%%%%%%%%%%%%%%%%%%%%%%

Five-dimensional $STU$ supergravity consists of minimal $\mathcal N = 2$ supergravity coupled to two vector multiplets.  The bosonic fields are the metric $g_{a b}$, three $\textrm{U}(1)$ gauge fields $A_I$, $I = 1, 2, 3$, and two dilatons $\gvf_i$, $i = 1, 2$.  The Einstein frame bosonic Lagrangian is
%%%
\be
\cL_5 = R \star 1 - \frac{1}{2} \sum_{i = 1}^2 \star \df \gvf_i \wedge \df \gvf_i - \frac{1}{2} \sum_{I = 1}^3 X_I^{-2} \star F_I \wedge F_I + F_1 \wedge F_2 \wedge A_3 ,
\ee
%%%
where $F_I = \df A_I$ and
%%%
\begin{align}
X_1 & = \expe{- \gvf_1/\sr{6} - \gvf_2/\sr{2}} , & X_2 & = \expe{- \gvf_1/\sr{6} + \gvf_2/\sr{2}} , & X_3 & = \expe{2 \gvf_1/\sr{6}} .
\end{align}
%%%
Hodge dualizing the gauge field $F_3$ in favour of a 3-form field strength $H$ using $F_3 = - X_3^2 \star H$ gives the Lagrangian
%%%
\be
\cL_5 = R \star 1 - \frac{1}{2} \sum_{i = 1}^2 \star \df \gvf_i - \frac{1}{2} \sum_{I = 1}^2 X_I^{-2} \star F_I \wedge F_I - \fr{1}{2} (X_1 X_2)^{-2} \star H \wedge H ,
\ee
%%%
where $H$ is given in terms of a 2-form potential $B$ by $H = \df B - \tf{1}{2} (A_1 \wedge F_2 + A_2 \wedge F_1)$.  The 3-charge Cveti\v{c}--Youm solution \cite{Cvetic:1996xz} describes black holes parameterized by a mass, 2 independent angular momenta and 3 independent electric charges.

If we consistently truncate by setting $A_1 = A_2 = A$ and $\gvf_2 = 0$, so that $X_1 = X_2 = X$, then we have the $D = 5$ case of \eq{stringEinstein}.  Minimal $\mathcal N = 2$ supergravity is a further consistent truncation with all gauge fields equal, $A_I = A$, and vanishing dilatons $\gvf_i = 0$.  The 3-charge Cveti\v{c}--Youm solution simplifies in this case.  Following the presentation of \cite{Chow:2008fe}, the metric and gauge field take the form
\begin{align}
\df s^2 & = - e^0 e^0 + \sum_{\mu = 1}^4 e^\mu e^\mu , & A & = \fr{q}{\sr{(r^2 + y^2) R}} e^0 .
\end{align}
%%%
The gauge field strength is
%%%
\be
F = \fr{2 q}{(r^2 +y^2)^2} (r e^0 \wedge e^1 + y e^2 \wedge e^3) ,
\ee
%%%
and its Hodge dual gives
%%%
\be
H = \star F = \fr{2 q}{(r^2 + y^2)^2} (y e^0 \wedge e^1 - r e^2 \wedge e^3) \wedge e^4
\ee
%%%
A KYT 3-form with torsion $H$ is
%%%
\be
Y = (y e^0 \wedge e^1 + r e^2 \wedge e^3) \wedge e^4 ,
\ee
%%%
so we are in a Darboux basis for the Hodge dual CCKYT 2-form.  $F$ takes the form discussed in Section \ref{liftsol}, so the lift condition \eq{KYTliftcond} is satisfied, so $Y$ lifts to a 6-dimensional KYT 3-form.  The lift condition continues to hold for the generalization to gauged supergravity \cite{Chong:2005hr}.

%%%%%%%%%%%%%%%%%%%%%%%%%%%%%%%%%%%%%%%%%

\subsection{Four-dimensional black holes}

%%%%%%%%%%%%%%%%%%%%%%%%%%%%%%%%%%%%%%%%%

4-dimensional $STU$ supergravity consists of $\mathcal N = 2$ supergravity coupled to three vector multiplets.  We consider lifts to 5 and 6 dimensions, largely following \cite{Chow:2014cca}.  There is a consistent truncation to one vector multiplet, sometimes called $-\im X^0 X^1$ supergravity.  The bosonic fields are the metric $g_{a b}$, two $\textrm{U}(1)$ gauge fields $A^1$ and $A^2$, a dilaton $\gvf$ and an axion $\chi$.  The Einstein frame Lagrangian, written in terms of $F^1$ and $\wtd{F}_2$, the dual of $F^2$, is
%%%
\begin{align}
\cL_4 & = R \star 1 - \tf{1}{2} \star \df \gvf \wedge \df \gvf - \tf{1}{2} \expe{2 \gvf} \star \df \chi \wedge \df \chi - \expe{-\gvf} (\star F^1 \wedge F^1 + \star \wtd{F}_2 \wedge \wtd{F}_2) \nnr
%%%
& \quad + \chi (F^1 \wedge F^1 + \wtd{F}_2 \wedge \wtd{F}_2) ,
\end{align}
%%%
where $F^I = \df A^I$ and $\df \wtd{F}_I = \df \wtd{A}_I$.  It is a special case of $STU$ supergravity with the gauge fields set pairwise equal.  If we consistently truncate by setting $\wtd{A}_2 = 0$ and dualize the axion $\chi$ in favour of a 3-form field strength $H = - \expe{2 \gvf} \star \df \chi$, then we have the $D = 4$ case of \eq{stringEinstein}.  If we consistently truncate by setting $A^1 = A^2 = A$, $\gvf = 0$ and $\chi = 0$, then we have Einstein--Maxwell theory.

The string frame metric $\df s^2$ and the Einstein frame metric $\df s_{\textrm{E}}^2$ are related by\footnote{This corrects an error in \cite{Chow:2013gba} that led to confusion there.}
%%%
\be
\df s^2 = \expe{\gvf} \df s_{\textrm{E}}^2 .
\ee
%%%
The 5- and 6-dimensional string frame metrics are
%%%
\begin{align}
\df s_5^2 & = \df s^2 + (\df z_1 - A^1)^2 , & \df s_6^2 = \df s_5^2 + (\df z_2 + \wtd{A}_2)^2 .
\end{align}
%%%

A charged, rotating black hole solution with both gauge fields dyonic was first found in \cite{LozanoTellechea:1999my}.  Following the notation of \cite{Chow:2013gba}, the string frame metric can be expressed in terms of vielbeins as
%%%
\be
\df s^2 = - e^0 e^0 + \sum_{\mu = 1}^3 e^\mu e^\mu ,
\ee
%%%
where
%%%
\begin{align}
e^0 & = \fr{\sr{(r_2^2 + u_2^2) R}}{W} (\df t + u_1 u_2 \, \df \psi) , & e^1 & = \sr{\fr{r_2^2 + u_2^2}{R}} \, \df r , \nnr
%%%
e^2 & = \fr{\sr{(r_2^2 + u_2^2) U}}{W} (\df t - r_1 r_2 \, \df \psi) , & e^3 & = \sr{\fr{r_2^2 + u_2^2}{U}} \, \df u .
%%%
\end{align}
%%%
The gauge field strengths are
%%%
\begin{align}
F^1 & = \fr{[Q_1 (r_2^2 - u_1 u_2) - P^1 u_2 (r_1 + r_2)] e^0 \wedge e^1 + [Q_1 r_2 (u_1 + u_2) + P^1 (r_1 r_2 - u_2^2)] e^2 \wedge e^3}{W (r_2^2 + u_2^2)} , \nnr
%%%
F^2 & = \fr{[Q_2 (r_1^2 - u_1 u_2) - P^2 u_1 (r_1 + r_2)] e^0 \wedge e^1 + [Q_2 r_1 (u_1 + u_2) + P^2 (r_1 r_2 - u_1^2)] e^2 \wedge e^3}{W (r_2^2 + u_2^2)} , \nnr
%%%
\wtd{F}_1 & = \fr{[P^1 (r_1^2 - u_1 u_2) + Q_1 u_1 (r_1 + r_2)] e^0 \wedge e^1 + [P^1 r_1 (u_1 + u_2) + Q_1 (u_1^2 - r_1 r_2)] e^2 \wedge e^3}{W (r_2^2 + u_2^2)} , \nnr
%%%
\wtd{F}_2 & = \fr{[P^2 (r_2^2 - u_1 u_2) + Q_2 u_2 (r_1 + r_2)] e^0 \wedge e^1 + [P^2 r_2 (u_1 + u_2) + Q_2 (u_2^2 - r_1 r_2)] e^2 \wedge e^3}{W (r_2^2 + u_2^2)} .
%%%
\end{align}
A string frame KYT 2-form with torsion $H$ is
%%%
\be
Y = u_2 e^0 \wedge e^1 + r_2 e^2 \wedge e^3 ,
\ee
%%%
so we are in a Darboux basis for $Y$ (and its Hodge dual CCKYT 2-form).  $F$ takes the form discussed in Section \ref{liftsol}, so the lift condition \eq{KYTliftcond} is satisfied under lifting to 5 and 6 dimensions, so $Y$ lifts to a 5-dimensional and 6-dimensional KYT 2-form.  The lift condition continues to hold for the generalization to gauged supergravity \cite{Chow:2013gba}.

Some special cases lifted from 4 to 5 dimensions were considered in \cite{Houri:2012su}.  One example was a lift of the 4-dimensional dyonic Kerr--Newman--NUT solution.  The second example was a 5-dimensional rotating black string \cite{Mahapatra:1993gx}, which from a 4-dimensional perspective corresponds to a rotating black hole where one gauge field, say $F^1$, is electric and the other gauge field, say $F^2$, is magnetic.

An alternative lift from 4 dimensions to 6 dimensions is on $S^2$ \cite{Kerimo:2004md}, leading to a solution of an $\mathcal N = (2, 2)$ gauged supergravity \cite{Kerimo:2003am}.  There is a truncation of the 6-dimensional theory such that the string frame Lagrangian is
%%%
\be
\cL_6 = \expe{-\gvf} (R \star 1 + \star \df \gvf \wedge \df \gvf - \tf{1}{2} \star H \wedge H - \textstyle \sum_{I = 1}^3 \star F_I \wedge F_I - 8 g^2 \star 1) .
\ee
%%%
For the black hole solution, the 6-dimensional string frame metric is
%%%
\be
\df s_6^2 = \df s^2 + \fr{1}{8 g^2} \df \Omega_2^2 ,
%%%
\ee
%%%
where $\df \Omega_2^2$ is the metric of $S^2$.  This is simply a direct product of the 4-dimensional string frame metric with $S^2$, and so Killing tensors trivially lift.

%%%%%%%%%%%%%%%%%%%%

\section{Conclusion}

%%%%%%%%%%%%%%%%%%%%

We have presented Kaluza--Klein-type lifts of lower-dimensional Killing--Yano forms with torsion to higher dimensions.  The lift condition \eq{KYTliftcond} appears in several different contexts, and it would be interesting to clarify more precisely the connections.  We gave the explicit solution of the lift condition in the generic case.  More general lifts might have more complicated lift conditions.

The Kaluza--Klein ansatz we used for reducing a metric and 3-form was not the most general, as the two lower-dimensional gauge fields were equal and a scalar was truncated.  There are more further Kaluza--Klein reductions to consider, including sphere reductions that give lower-dimensional gauged supergravities.  The theories we considered had a natural 3-form torsion $H$, and it would be interesting to consider more general theories where this is not the case, such as 11-dimensional supergravity with its 4-form field strength.  We focussed on lifting antisymmetric Killing--Yano forms, but there could be further generalizations to symmetric conformal Killing--St\"{a}ckel tensors.

%%%%%%%%%%%%%%%%%%%%%%%%%%%

\section*{Acknowledgements}

%%%%%%%%%%%%%%%%%%%%%%%%%%%

This work was supported in part by the European Union's Seventh Framework Programme under grant agreements (FP7-REGPOT-2012-2013-1) no.\ 316165, the EU-Greece program ``Thales'' MIS 375734 and was also co-financed by the European Union (European Social Fund, ESF) and Greek national funds through the Operational Program ``Education and Lifelong Learning'' of the National Strategic Reference Framework (NSRF) under ``Funding of proposals that have received a positive evaluation in the 3rd and 4th Call of ERC Grant Schemes''.


\begin{thebibliography}{99}

\bibitem{Yano}
K.~Yano,
``Some remarks on tensor fields and curvature,''
Ann.\ Math.\ {\bf 55}, 328 (1952).

\bibitem{Floyd}
R.~M.~Floyd,
``The dynamics of Kerr fields,''
PhD thesis, University of London (1974).

\bibitem{Penrose:1973um} 
  R.~Penrose,
  ``Naked singularities,''
  Annals N.\ Y.\ Acad.\ Sci.\  {\bf 224}, 125 (1973).

\bibitem{Kubiznak:2008qp} 
  D.~Kubiz\v{n}\'{a}k,
  ``Hidden symmetries of higher-dimensional rotating black holes,''
	PhD thesis, University of Alberta (2008)
  [arXiv:0809.2452 [gr-qc]].
	
\bibitem{Yasui:2011pr} 
  Y.~Yasui and T.~Houri,
  ``Hidden symmetry and exact solutions in Einstein gravity,''
  Prog.\ Theor.\ Phys.\ Suppl.\  {\bf 189}, 126 (2011)
  %doi:10.1143/PTPS.189.126
  [arXiv:1104.0852 [hep-th]].
	
\bibitem{Cariglia:2014ysa} 
  M.~Cariglia,
  ``Hidden symmetries of dynamics in classical and quantum physics,''
  Rev.\ Mod.\ Phys.\  {\bf 86}, 1283 (2014)
  %doi:10.1103/RevModPhys.86.1283
  [arXiv:1411.1262 [math-ph]].

\bibitem{Hughston:1972qf} 
  L.~P.~Hughston, R.~Penrose, P.~Sommers and M.~Walker,
  ``On a quadratic first integral for the charged particle orbits in the charged Kerr solution,''
  Commun.\ Math.\ Phys.\  {\bf 27}, 303 (1972).
  
\bibitem{Carter:1979fe} 
  B.~Carter and R.~G.~McLenaghan,
  ``Generalized total angular momentum operator for the Dirac equation in curved space-time,''
  Phys.\ Rev.\ D {\bf 19}, 1093 (1979).
	
\bibitem{Gibbons:1993ap} 
  G.~W.~Gibbons, R.~H.~Rietdijk and J.~W.~van Holten,
  ``SUSY in the sky,''
  Nucl.\ Phys.\ B {\bf 404}, 42 (1993)
  %doi:10.1016/0550-3213(93)90472-2
  [hep-th/9303112].
	
\bibitem{YanoBochner}
K.~Yano and S.~Bochner,
``Curvature and Betti numbers,''
Annals of Mathematics Studies {\bf 32}, Princeton University Press (1953).

\bibitem{Wu:2009cn} 
  S.-Q.~Wu,
  ``Separability of a modified Dirac equation in a five-dimensional rotating, charged black hole in string theory,''
  Phys.\ Rev.\ D {\bf 80}, 044037 (2009) [erratum-ibid.\ {\bf 80}, 069902 (2009)]
  %doi:10.1103/PhysRevD.80.044037, 10.1103/PhysRevD.80.069902
  [arXiv:0902.2823 [hep-th]].
	
\bibitem{Kubiznak:2009qi} 
  D.~Kubiz\v{n}\'{a}k, H.~K.~Kunduri and Y.~Yasui,
  ``Generalized Killing--Yano equations in $D=5$ gauged supergravity,''
  Phys.\ Lett.\ B {\bf 678}, 240 (2009)
  [arXiv:0905.0722 [hep-th]].

\bibitem{Houri:2010fr} 
  T.~Houri, D.~Kubiz\v{n}\'{a}k, C.~M.~Warnick and Y.~Yasui,
  ``Generalized hidden symmetries and the Kerr--Sen black hole,''
  JHEP {\bf 1007}, 055 (2010)
  [arXiv:1004.1032 [hep-th]].

\bibitem{Chow:2013gba} 
  D.~D.~K.~Chow and G.~Comp\`{e}re,
  ``Dyonic AdS black holes in maximal gauged supergravity,''
  Phys.\ Rev.\ D {\bf 89}, 065003 (2014)
  [arXiv:1311.1204 [hep-th]].
	
\bibitem{Papadopoulos:2007gf} 
  G.~Papadopoulos,
  ``Killing--Yano equations and $G$ structures,''
  Class.\ Quant.\ Grav.\  {\bf 25}, 105016 (2008)
  %doi:10.1088/0264-9381/25/10/105016
  [arXiv:0712.0542 [hep-th]].
	
\bibitem{Papadopoulos:2011cz} 
  G.~Papadopoulos,
  ``Killing--Yano equations with torsion, worldline actions and $G$-structures,''
  Class.\ Quant.\ Grav.\  {\bf 29}, 115008 (2012)
  %doi:10.1088/0264-9381/29/11/115008
  [arXiv:1111.6744 [hep-th]].
	
\bibitem{Englert:1983qe} 
  F.~Englert, M.~Rooman and P.~Spindel,
  ``Supersymmetry breaking by torsion and the Ricci-flat squashed seven-spheres,''
  Phys.\ Lett.\ B {\bf 127}, 47 (1983).
  %doi:10.1016/0370-2693(83)91627-1

\bibitem{Duff:1983gh} 
  M.~J.~Duff, B.~E.~W.~Nilsson and C.~N.~Pope,
  ``Superunification from eleven dimensions,''
  Nucl.\ Phys.\ B {\bf 233}, 433 (1984).
	
\bibitem{Duff:1986hr} 
  M.~J.~Duff, B.~E.~W.~Nilsson and C.~N.~Pope,
  ``Kaluza--Klein supergravity,''
  Phys.\ Rept.\  {\bf 130}, 1 (1986).
  %doi:10.1016/0370-1573(86)90163-8
	
%\cite{Santillan:2011sh}
\bibitem{Santillan:2011sh} 
  O.~P.~Santillan,
  ``Hidden symmetries and supergravity solutions,''
  J.\ Math.\ Phys.\  {\bf 53}, 043509 (2012)
  %doi:10.1063/1.3698087
  [arXiv:1108.0149 [hep-th]].
  %%CITATION = doi:10.1063/1.3698087;%%

\bibitem{Chow:2008fe} 
  D.~D.~K.~Chow,
  ``Symmetries of supergravity black holes,''
  Class.\ Quant.\ Grav.\  {\bf 27}, 205009 (2010)
  [arXiv:0811.1264 [hep-th]].
	
\bibitem{Chong:2005hr} 
  Z.-W.~Chong, M.~Cveti\v{c}, H.~L\"{u} and C.~N.~Pope,
  ``General nonextremal rotating black holes in minimal five-dimensional gauged supergravity,''
  Phys.\ Rev.\ Lett.\  {\bf 95}, 161301 (2005)
  %doi:10.1103/PhysRevLett.95.161301
  [hep-th/0506029].
	
\bibitem{Ahmedov:2009ab} 
  H.~Ahmedov and A.~N.~Aliev,
  ``Uniqueness of rotating charged black holes in five-dimensional minimal gauged supergravity,''
  Phys.\ Lett.\ B {\bf 679}, 396 (2009)
  %doi:10.1016/j.physletb.2009.07.070
  [arXiv:0907.1804 [hep-th]].

\bibitem{Houri:2012eq} 
  T.~Houri, D.~Kubiz\v{n}\'{a}k, C.~M.~Warnick and Y.~Yasui,
  ``Local metrics admitting a principal Killing--Yano tensor with torsion,''
  Class.\ Quant.\ Grav.\  {\bf 29}, 165001 (2012)
  %doi:10.1088/0264-9381/29/16/165001
  [arXiv:1203.0393 [hep-th]].
	
\bibitem{Houri:2012su} 
  T.~Houri and K.~Yamamoto,
  ``Killing--Yano symmetry of Kaluza--Klein black holes in five dimensions,''
  Class.\ Quant.\ Grav.\  {\bf 30}, 075013 (2013)
  [arXiv:1212.2163 [gr-qc]].
	
\bibitem{Hinoue:2014zta} 
  K.~Hinoue, T.~Houri, C.~Rugina and Y.~Yasui,
  ``General Wahlquist metrics in all dimensions,''
  Phys.\ Rev.\ D {\bf 90}, 024037 (2014)
  %doi:10.1103/PhysRevD.90.024037
  [arXiv:1402.6904 [gr-qc]].
	
\bibitem{Chervonyi:2015ima} 
  Y.~Chervonyi and O.~Lunin,
  ``Killing(--Yano) tensors in string theory,''
  JHEP {\bf 1509}, 182 (2015)
  [arXiv:1505.06154 [hep-th]].
	
\bibitem{Eisenhart}
L.~P.~Eisenhart,
``Dynamical trajectories and geodesics,''
Ann.\ Math.\ {\bf 30}, 591 (1929).	

\bibitem{Benn}
I.~M.~Benn,
``Geodesics and Killing tensors in mechanics,''
J.\ Math.\ Phys.\ {\bf 47}, 022903 (2006).

\bibitem{Krtous:2015ona} 
  P.~Krtou\v{s}, D.~Kubiz\v{n}\'{a}k and I.~Kol\'{a}r,
  ``Killing--Yano forms and Killing tensors on a warped space,''
  Phys.\ Rev.\ D {\bf 93}, 024057 (2016)
  %doi:10.1103/PhysRevD.93.024057
  [arXiv:1508.02642 [gr-qc]].
		
\bibitem{Cariglia:2012fi} 
  M.~Cariglia,
  ``Hidden symmetries of Eisenhart--Duval lift metrics and the Dirac equation with flux,''
  Phys.\ Rev.\ D {\bf 86}, 084050 (2012)
  %doi:10.1103/PhysRevD.86.084050
  [arXiv:1206.0022 [gr-qc]].
	
\bibitem{Semmelmann}
U.~Semmelmann,
``Conformal Killing forms on Riemannian manifolds,''
Math.\ Z.\ {\bf 245}, 503 (2003)
[math/0206117].
		
\bibitem{Tachibana}
S.~Tachibana,
``On conformal Killing tensor in a Riemannian space,''
T\^{o}hoku.\ Math.\ J.\ {\bf 21}, 56 (1969).

\bibitem{Kashiwada}
T.~Kashiwada,
``On conformal Killing tensor,''
Nat.\ Sci.\ Rep.\ Ochanomizu Univ.\ {\bf 19}, 67 (1968).

\bibitem{Howe:1996kj} 
  P.~S.~Howe and G.~Papadopoulos,
  ``Twistor spaces for hyper-K\"{a}hler manifolds with torsion,''
  Phys.\ Lett.\ B {\bf 379}, 80 (1996)
  %doi:10.1016/0370-2693(96)00393-0
  [hep-th/9602108].
	
\bibitem{Houri:2010qc} 
  T.~Houri, D.~Kubiz\v{n}\'{a}k, C.~Warnick and Y.~Yasui,
  ``Symmetries of the Dirac operator with skew-symmetric torsion,''
  Class.\ Quant.\ Grav.\  {\bf 27}, 185019 (2010)
  %doi:10.1088/0264-9381/27/18/185019
  [arXiv:1002.3616 [hep-th]].
  %%CITATION = doi:10.1088/0264-9381/27/18/185019;%%

\bibitem{Kubiznak:2010ig} 
  D.~Kubiz\v{n}\'{a}k, C.~M.~Warnick and P.~Krtou\v{s},
  ``Hidden symmetry in the presence of fluxes,''
  Nucl.\ Phys.\ B {\bf 844}, 185 (2011)
  %doi:10.1016/j.nuclphysb.2010.11.001
  [arXiv:1009.2767 [hep-th]].
  %%CITATION = doi:10.1016/j.nuclphysb.2010.11.001;%%

\bibitem{Cariglia:2003kf} 
  M.~Cariglia,
  ``Quantum mechanics of Yano tensors: Dirac equation in curved spacetime,''
  Class.\ Quant.\ Grav.\  {\bf 21}, 1051 (2004)
  %doi:10.1088/0264-9381/21/4/022
  [hep-th/0305153].
	
%\cite{Carter:1987id}
\bibitem{Carter:1987id} 
  B.~Carter,
  ``Separability of the Killing--Maxwell system underlying the generalized angular momentum constant in the Kerr--Newman black hole metrics,''
  J.\ Math.\ Phys.\  {\bf 28}, 1535 (1987).
  %doi:10.1063/1.527509
  %%CITATION = doi:10.1063/1.527509;%%

\bibitem{Acik:2008wv} 
  \"{O}.~A\c{c}ik, \"{U}.~Ertem, M.~\"{O}nder and A.~Ver\c{c}in,
  ``First-order symmetries of the Dirac equation in curved background: a unified dynamical symmetry condition,''
  Class.\ Quant.\ Grav.\  {\bf 26}, 075001 (2009)
  %doi:10.1088/0264-9381/26/7/075001
  [arXiv:0806.1328 [gr-qc]].

\bibitem{Tanimoto:1995np} 
  M.~Tanimoto,
  ``The role of Killing--Yano tensors in supersymmetric mechanics on a curved manifold,''
  Nucl.\ Phys.\ B {\bf 442}, 549 (1995)
  %doi:10.1016/0550-3213(95)00086-8
  [gr-qc/9501006].

\bibitem{Rietdijk:1995ye} 
  R.~H.~Rietdijk and J.~W.~van Holten,
  ``Killing tensors and a new geometric duality,''
  Nucl.\ Phys.\ B {\bf 472}, 427 (1996)
  %doi:10.1016/0550-3213(96)00206-4
  [hep-th/9511166].

\bibitem{Mei:2007bn} 
  J.~Mei and C.~N.~Pope,
  ``New rotating non-extremal black holes in $D=5$ maximal gauged supergravity,''
  Phys.\ Lett.\ B {\bf 658}, 64 (2007)
  %doi:10.1016/j.physletb.2007.10.045
  [arXiv:0709.0559 [hep-th]].
  %%CITATION = doi:10.1016/j.physletb.2007.10.045;%%

\bibitem{Chow:2008ip} 
  D.~D.~K.~Chow,
  ``Charged rotating black holes in six-dimensional gauged supergravity,''
  Class.\ Quant.\ Grav.\  {\bf 27}, 065004 (2010)
  %doi:10.1088/0264-9381/27/6/065004
  [arXiv:0808.2728 [hep-th]].
	
\bibitem{Chow:2007ts} 
  D.~D.~K.~Chow,
  ``Equal charge black holes and seven-dimensional gauged supergravity,''
  Class.\ Quant.\ Grav.\  {\bf 25}, 175010 (2008)
  %doi:10.1088/0264-9381/25/17/175010
  [arXiv:0711.1975 [hep-th]].

\bibitem{Callan:1985ia} 
  C.~G.~Callan, D.~Friedan, E.~J.~Martinec and M.~J.~Perry,
  ``Strings in background fields,''
  Nucl.\ Phys.\ B {\bf 262}, 593 (1985).

\bibitem{Chong:2004na} 
  Z.-W.~Chong, M.~Cveti\v{c}, H.~L\"{u} and C.~N.~Pope,
  ``Charged rotating black holes in four-dimensional gauged and ungauged supergravities,''
  Nucl.\ Phys.\ B {\bf 717}, 246 (2005)
  %doi:10.1016/j.nuclphysb.2005.03.034
  [hep-th/0411045].
	
\bibitem{Chong:2005da} 
  Z.-W.~Chong, M.~Cveti\v{c}, H.~L\"{u} and C.~N.~Pope,
  ``Five-dimensional gauged supergravity black holes with independent rotation parameters,''
  Phys.\ Rev.\ D {\bf 72}, 041901(R) (2005)
  %doi:10.1103/PhysRevD.72.041901
  [hep-th/0505112].
		
\bibitem{Cvetic:1996xz} 
  M.~Cveti\v{c} and D.~Youm,
  ``General rotating five-dimensional black holes of toroidally compactified heterotic string,''
  Nucl.\ Phys.\ B {\bf 476}, 118 (1996)
  [hep-th/9603100].
	
\bibitem{Chow:2014cca} 
  D.~D.~K.~Chow and G.~Comp\`{e}re,
  ``Black holes in $\mathcal N=8$ supergravity from SO(4,4) hidden symmetries,''
  Phys.\ Rev.\ D {\bf 90}, no. 2, 025029 (2014)
  %doi:10.1103/PhysRevD.90.025029
  [arXiv:1404.2602 [hep-th]]
	
\bibitem{LozanoTellechea:1999my} 
  E.~Lozano-Tellechea and T.~Ort\'{\i}n,
  ``The general, duality-invariant family of non-BPS black-hole solutions of N=4, d = 4 supergravity,''
  Nucl.\ Phys.\ B {\bf 569}, 435 (2000)
  %doi:10.1016/S0550-3213(99)00762-2
  [hep-th/9910020].
	
\bibitem{Mahapatra:1993gx} 
  S.~Mahapatra,
  ``Rotating charged black string solution,''
  Phys.\ Rev.\ D {\bf 50}, 947 (1994)
  [hep-th/9301125].
	
\bibitem{Kerimo:2004md} 
  J.~Kerimo, J.~T.~Liu, H.~L\"{u} and C.~N.~Pope,
  ``Variant $\mathcal N = (1,1)$ supergravity and (Minkowski)${_4} \times S^2$ vacua,''
  Class.\ Quant.\ Grav.\  {\bf 21}, 3287 (2004)
  %doi:10.1088/0264-9381/21/13/011
  [hep-th/0401001].
	
\bibitem{Kerimo:2003am} 
  J.~Kerimo and H.~L\"{u},
  ``New $D = 6$, $\mathcal N = (1,1)$ gauged supergravity with supersymmetric (Minkowski)${_4} \times S^2$ vacuum,''
  Phys.\ Lett.\ B {\bf 576}, 219 (2003)
  %doi:10.1016/j.physletb.2003.09.076
  [hep-th/0307222].

\end{thebibliography}
\end{document}